\documentclass{article} % For LaTeX2e
\usepackage{nips15submit_e,times}
\usepackage{hyperref}
\usepackage{url}

%%% CUSTOM COMMANDS
\usepackage{amsmath,amssymb,amsthm, widetext, xspace}
\usepackage{subfigure,algorithmic,algorithm, dsfont}
\usepackage{epsfig}
\usepackage{graphicx}
\usepackage[numbers]{natbib}
\newcommand{\var}{\mathrm{var}}

\newcommand{\ber}{\mathrm{Ber}}
\newcommand{\figref}[1]{Fig.~\ref{fig:#1}}

\newcommand{\thmref}[1]{Theorem~\ref{theorem:#1}}
\newcommand{\figuredir}{./}
\newcommand{\appref}[1]{Appendix~\ref{sec:#1}}
\newcommand{\p}{{\mu}_p}
\newcommand{\bp}{{\hat{\mu}_p}}
\newcommand{\hp}{{\hat{\mu}_p}}
\newcommand{\tp}{{\hat{\mu}_p}}
\newcommand{\ntrad}{{{n}_a^*}}
\newcommand{\nhyb}{{{n}_b}}
\newcommand{\hhyb}{{{n}_a}}
\newcommand{\ntrans}{{n}_b}

\newcommand{\na}{{n_a}}
\newcommand{\nb}{{n_b}}

\newcommand{\cc}{{c_c}}
\newcommand{\ca}{{c_a}}
\newcommand{\cb}{{c_b}}
\newcommand{\ftexton}{$f_b^{\textrm{(texton)}}$\xspace}
\newcommand{\fCNN}{$f_b^{\textrm{(CNN)}}$\xspace}
\newcommand{\hdotsc}{, \hdots ,}

\newtheorem{Lemma}{Lemma}
\newtheorem{Theorem}{Theorem}

%%% END CUSTOM COMMANDS

\title{Random Sampling in an Age of Automation: Minimizing Expenditures through Balanced Collection and Annotation}

\author{
Oscar Beijbom \\
UC Berkeley\\
\texttt{obeijbom@berkeley.edu}\\
}

% The \author macro works with any number of authors. There are two commands
% used to separate the names and addresses of multiple authors: \And and \AND.
%
% Using \And between authors leaves it to \LaTeX{} to determine where to break
% the lines. Using \AND forces a linebreak at that point. So, if \LaTeX{}
% puts 3 of 4 authors names on the first line, and the last on the second
% line, try using \AND instead of \And before the third author name.

\nipsfinalcopy % Uncomment for camera-ready version

\begin{document}

\maketitle

\begin{abstract}
Methods for automated collection and annotation are changing the cost-structures of sampling surveys for a wide range of applications. Digital samples in the form of images or audio recordings can be collected rapidly, and annotated by computer programs or crowd workers. We consider the problem of estimating a population mean under these new cost-structures, and propose a Hybrid-Offset sampling design. This design utilizes two annotators: a primary, which is accurate but costly (e.g. a human expert) and an auxiliary which is noisy but cheap (e.g. a computer program), in order to minimize total sampling expenditures. Our analysis gives necessary conditions for the Hybrid-Offset design and specifies optimal sample sizes for both annotators. Simulations on data from a coral reef survey program indicate that the Hybrid-Offset design outperforms several alternative sampling designs. In particular, sampling expenditures are reduced $50\%$ compared to the Conventional design currently deployed by the coral ecologists. 
\end{abstract}

\section{Introduction}
Using random sampling to estimate the mean of a population is a fundamentally important method to the sciences and society at large, and has been studied extensively~\cite{sampling, deming1966some}. Deployment of any random sampling design requires collection of some number of observations sampled randomly from nature. In the ecological sciences, this was traditionally done \emph{in situ} by an expert. Recently, advances in robotics, sensor technology, digital storage, and information technology have enabled rapid collection of samples in digital format, such as images~\cite{rango2009unmanned, hunt2010acquisition} or audio~\cite{johnson2003digital}. The popularity of digital sample collection can be attributed to three key factors: it creates a permanent record; it can be done cheaply using automated sampling vehicles or non-expert personnel; and it is generally fast. However, such samples (e.g. a photoquadrat of the forest floor) typically require annotation by an expert in order to reveal the desired quantity of interest (e.g. a count of insects). Such annotation work can be slow, tedious, expensive, and prone to error~\cite{macleod2010time, ninio2003estimating}. 

Concurrent with the development of automated collection methods, advances in computer-vision and computer-audition have enabled automation of said annotation work. Such methods often rely on machine learning where expert annotated archived data sets are utilized to train automated annotators. Automation is a compelling low-cost alternative to expert annotations, but it's generally less reliable and may be biased~\cite{beijbom2012automated, hand2006classifier, torralba2011unbiased}. This is particularly problematic if the probability density of the archived data differs from the density of the data to be sampled~\cite{pan2010survey}.

Crowdsourcing offers another low-cost alternative to expert annotation for e.g. document or image annotation~\cite{welinder, platemate}. Crowdsourced annotations can be noisy, and much work has been devoted to improving the quality of such annotations. This is generally done either by carefully designing the tasks given to the crowd workers~\cite{platemate}, or by collecting multiple crowd annotations for the same sample and then modeling, and compensating for, the annotation errors~\cite{welinder}.

We consider the problem of estimating a population mean under these new cost-structures of data collection and annotation. This is formalized as follows: Given a procedure for collecting random samples, $x_i \in {\cal X}$  (e.g. images or audio recordings), each with an associated quantity of interest (value), $y_i \in {\cal Y} \subseteq [0,  1]$; and two annotators: a primary, $f_a(x_i) \in {\cal Y}$ which is accurate but costly (e.g. a human expert), and an auxiliary, $f_b(x_i)\in {\cal Y}$ which is cheap but noisy (e.g. a computer program, lay-person, or crowd worker).  Our goal is to derive a sampling design that achieves unbiased estimates of the population mean ($E[y_i]$) at a target error and confidence, while minimizing total cost of collection and annotation. In particular, we investigate the optimal balance between the number of samples annotated by the primary and auxiliary annotators. This work is, to the best of our knowledge, the first to consider this problem.

A key challenge is to define a procedure that can correct for the potential bias of the auxiliary annotator. This is difficult as we cannot assume any prior knowledge of the underlying probability density from which the samples are drawn. Indeed, if this density was known, the population mean could be estimated directly, making the sampling work unnecessary. If the auxiliary annotator is based on machine learning and trained on archived data with a different probability density, the problem of transfer learning arises for which the generalization bounds of statistical learning theory generally do not apply~\cite{pan2010survey}. Methods for bias-correction have been proposed independently by ~\cite{solow2001estimating, hopkins2010method}, that do not require knowledge of the full underlying probability density (of the sampled data) but only of the conditional probabilities of a label given a sample. However, as we shall see, this information may not always be at hand. 

The key contribution of this work is an analysis of a Hybrid-Offset design that directly models the offset (bias) of the auxiliary annotator. It is ``Hybrid'' because it requires a subset of the samples to be annotated by both annotators, and is unbiased if the samples are independent and identically distributed, which they are by construction under random sampling. As demonstrated by simulations on coral reef survey data, the Hybrid-Offset design is cost-effective and robust. In particular, it outperforms a Hybrid-Ratio design which utilizes the ratio-estimator commonly used in the sampling literature~\cite{sampling,royall1981empirical}. It also outperforms several designs that only rely on one of the annotators, including the design currently used by the coral ecologists. We believe that the Hybrid-Offset can be widely utilized, in particular for ecological surveys relying on digital samples~\cite{willis2000baited, olson2007submersible, rango2009unmanned, archer1988autogenic, johnson2003digital}. Other contributions of this work are: (1) an analysis of the bias-correction method of \cite{solow2001estimating, hopkins2010method} in the context of random sampling; and (2) an improved machine-learning method, based on convolutional neural networks, for annotation of coral reef survey images. 

\subsection{Related work}
Our work is most closely related to the literature on survey sampling with auxiliary data~\cite{sampling,royall1981empirical}. In that context, ``auxiliary data'' is typically not a direct estimate of the variable of interest but some other, related quantity. For example, if the variable of interest, $y_i$ is the number of animals per plot, the auxiliary data can be the plot area, vegetation type, or plot elevation. In this literature, auxiliary data is typically incorporated using a ratio estimator which can reduce the estimation errors if there is an approximately linear relationship between the auxiliary data and the variable of interest~\cite{sampling}. However, the ratio estimator is design-biased, and analysis of the estimator variance typically assumes that auxiliary data is available for the whole population~\cite{sampling,royall1981empirical}. In this work, in contrast, we do not assume that the auxiliary data is available for all samples, but that this data can be acquired at a cost by collecting additional samples. This additional cost is then taken into account when deriving optimal sampling sizes. Another key difference is that we do not use the ratio estimator but an offset estimator which directly estimates the bias of the auxiliary annotator. The offset estimator is design-unbiased and allows for a more straight-forward analysis.

Another line of related work utilizes stratified random sampling~\cite{bennett2010online}, importance sampling~\cite{sawade2010active} or generative models of the classifier score distribution~\cite{welinder2013lazy}, to achieve cost-effective estimates of classifier performance on new data. 
The work of Garnett et al.~\cite{garnett2012bayesian} is particularly relevant, and investigates methods for active selection of samples in order to estimate class proportions. However, these methods all operate on a fixed set of samples. In contrast, we include the sample collection in our model which enables a joint minimization of annotation and collection costs.

Our work is also related to active learning and transfer learning. It is related, in particular, to recent work on active transfer learning where labels are queried to optimize classifier performance in a target domain~\cite{wang2014active}. A key difference between that work and ours is that active learning methods optimize the labeling effort to create the best \emph{classifier} (which then can, presumably, be used to label more data and in order to estimate the desired data-products). In contrast, we directly optimize the labeling effort to derive the desired \emph{data-product} (i.e. the population mean).

\section{Preliminaries}
\subsection{Problem Setup}
We denote by $\p \equiv E[y_i]$ and $\sigma_p^2 \equiv \var[y_i]$ the first and second moments of the (unknown) probability density function of the values. The values are sampled randomly from nature, and are therefore i.i.d. We further denote by $f_a: {\cal X} \rightarrow {\cal Y}$ the primary, and by $f_b: {\cal X} \rightarrow {\cal Y}$ the auxiliary annotator, $\epsilon_{a, i} \equiv f_a(x_i) - y_i$ the error of $f_a$ on sample $i$, $\mu_a \equiv E[\epsilon_{a, i}]$ and $\sigma_a^2 \equiv \var[\epsilon_{a, i}]$. Similarly, $\mu_b \equiv E[\epsilon_{b, i}]$ and $\sigma_b^2 \equiv \var[\epsilon_{b, i}]$. Note that we do \emph{not} make any assumptions on the underlying probability densities of the sample values or annotator errors.

We denote by $\na$ and $\nb$ the number of samples annotated by $f_a$ and $f_b$, respectively. The number of collected samples is given by $\max(n_a, n_b)$ since samples needs to be annotated to provide any information, and conversely, needs to be collected in order to be annotated. We denote by $\cc, \ca$ and $\cb$ the cost per sample for collection, annotation by $f_a$ and annotation by $f_b$, respectively. The `accurate and expensive' characteristics of $f_a$ are operationalized by letting $\sigma_a^2 < \sigma_b^2$ and $\ca > \cb$. 

We can now precisely state our goal: Given costs $\cc, \ca$ and $\cb$, and two annotators, $f_a$ and $f_b$, derive a sampling design that estimates the population mean, $\p$ by defining the number of annotated samples ($\na$ and $\nb$), so that $E[\hat{\mu}_p] = \p$ and $\textrm{Pr}(|\hat{\mu}_p - \p| > d) < \delta$, for a target error, $d$ and confidence, $\delta$. The utility of the sampling design is evaluated by the Total Sampling Cost (TSC), $b$:
\begin{align}
b(\na, \nb) = c_a \na + c_b \nb + \max(\na, \nb) c_c.
\end{align}

We make three assumptions. First, we assume that the number of collected samples is small in comparison with the total size of the population which allows us to omit the finite-population correction factor~\cite{sampling}. Second, we assume that the primary annotator, $f_a$ is unbiased, i.e. $\mu_a = 0$, and that the correlation between $\epsilon_{a,i}$ and $y_i$ is negligible. Third, because the two annotators are independent entities, we assume zero correlation between the annotator errors $\epsilon_{a, i}$ and $\epsilon_{b, i}$. However, we do \emph{not} make any assumptions on the correlation between the auxiliary annotation errors $\epsilon_{b,i}$ and the sample value $y_i$, which may be large. All proofs are in the Appendix.

\subsection{Conventional design}
We denote by `conventional', a sampling design where all collected samples, $x_i$ are annotated by the primary annotator $f_a$, i.e. $n_b = 0$. In such design, an unbiased estimator of $\p$ is given by
\begin{align}
\bp = \frac{1}{\na} \sum_{i=1}^{\na} f_a(x_i), \label{eq:pbreve}
\end{align}
with variance
\begin{align}
\var[\bp] =  \frac{1}{\na} (\sigma_p^2 + \sigma_a^2). \label{eq:pvar_breve}
\end{align}
The variance ($\sigma_p^2$ + $\sigma_a^2$) is often unknown, and must be estimated by the sample variance of $f_a(x_1) \hdotsc f_a(x_{\na})$. The sample size, $\na$ needs to be large enough to ensure that $\textrm{Pr}(|\bp - \p| > d) < \delta$, for a target error $d$ and confidence $\delta$. From the Central Limit Theorem, this is satisfied when
\begin{align}
\zeta_{\delta} \sqrt{ \var [\bp]} \leq d, \label{eq:varreq}
\end{align}
where $\zeta_{\delta}$ is the upper $1-\delta/2$ point on on the standard normal distribution curve~\cite{sampling}. The target sample size is given by inserting \eqref{eq:pvar_breve} into \eqref{eq:varreq} yielding
\begin{align}
\ntrad = \frac{\zeta_{\delta}^2}{d^2} (\sigma_p^2 + \sigma_a^2), \label{eq:ntrad}
\end{align}
for a TSC: $(\cc + c_a)\ntrad$.

\section{Hybrid-Offset design}
Now consider a hybrid design where $\nb \geq \ntrad$ samples are collected and annotated by the auxiliary annotator $f_b$, and where a subset $\na \leq \nb$ is also annotated by the primary annotator $f_a$. An offset estimator of $\p$ under this design is given by 
\begin{align}
\hp =  \frac{1}{\nb} \sum_{i = 1}^\nb f_b(x_i) - \hat{\mu}_b. \label{eq:phat}
\end{align}
The offset estimator is unbiased and an unbiased estimate of $\mu_b$ is given by
\begin{align}
\hat{\mu}_b = \frac{1}{\na} \sum_{i=1}^{\na} f_b(x_i) - f_a(x_i). \label{eq:bhat}
\end{align}
The variance of $\hp$ is given by
\begin{align}
\var[\hp] = \frac{1}{\nb} (\sigma_p^2 - \sigma_b^2 ) + \frac{1}{\na} (\sigma_a^2  + \sigma_b^2), \label{eq:pvar}
\end{align}
and notably does \emph{not} depend on the covariance between $y_i$ and $\epsilon_{b,i}$. This follows directly from the derivations of \eqref{eq:pvar}, which are provided in \appref{offset_variance}. We denote by ``Hybrid-Offset'' a design that balances $n_a$ and $n_b$ to minimize the TSC.

If the costs are such that a large number of samples, $\nb$ can be collected and annotated by the auxiliary annotator, the magnitude of the first term in \eqref{eq:pvar} becomes small and the sampling error depends mainly on the auxiliary annotation error, $\sigma_b^2$. In contrast, the conventional sampling design depends mainly on the data variance, $\sigma_p^2$ \eqref{eq:pvar_breve}. This is compelling because while the data variance is a fixed constant of nature, the auxiliary annotation errors depend on the choice and quality of the auxiliary annotator, which we control. It also leads to our first result:
\begin{Theorem}
\label{theorem:1}
For any fixed $\nb > \na$, the variance of the Hybrid-Offset estimator~\eqref{eq:pvar} is smaller than the variance of the Conventional design~\eqref{eq:pvar_breve} if and only if $\sigma_b^2 < \sigma_p^2$.
\end{Theorem}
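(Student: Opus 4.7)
The plan is to reduce the claimed equivalence to an elementary sign analysis by directly subtracting the two variance formulas. Write $V_H := \var[\hp]$ from \eqref{eq:pvar} and $V_C := \var[\bp]$ from \eqref{eq:pvar_breve} evaluated at the sample size $\na$ (since the Conventional design processes whatever samples it has with $f_a$). Then
\begin{align*}
V_C - V_H = \frac{1}{\na}(\sigma_p^2 + \sigma_a^2) - \frac{1}{\nb}(\sigma_p^2 - \sigma_b^2) - \frac{1}{\na}(\sigma_a^2 + \sigma_b^2).
\end{align*}

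The key observation is that the $\sigma_a^2$ terms cancel outright, and the remaining expression collapses into a single product. After cancellation, the right-hand side becomes $\tfrac{1}{\na}(\sigma_p^2 - \sigma_b^2) - \tfrac{1}{\nb}(\sigma_p^2 - \sigma_b^2)$, which factors as
\begin{align*}
V_C - V_H = (\sigma_p^2 - \sigma_b^2)\left(\frac{1}{\na} - \frac{1}{\nb}\right).
\end{align*}
Since $\nb > \na > 0$, the second factor $\tfrac{1}{\na} - \tfrac{1}{\nb}$ is strictly positive, so the sign of $V_C - V_H$ is determined entirely by the sign of $\sigma_p^2 - \sigma_b^2$. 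This yields both directions of the ``if and only if'': $V_H < V_C$ holds exactly when $\sigma_b^2 < \sigma_p^2$.

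I do not anticipate any real obstacle here; the content of the theorem is essentially a bookkeeping observation on top of \eqref{eq:pvar}, so the only step that requires any care is lining up the algebra so that the $\sigma_a^2$ contributions cancel and the residual factors cleanly as a product. The only substantive modeling point worth flagging is that the Conventional variance in \eqref{eq:pvar_breve} is being compared at the \emph{same} primary sample size $\na$ that appears in the Hybrid-Offset design, which is what the phrase ``for any fixed $\nb > \na$'' in the statement presupposes.
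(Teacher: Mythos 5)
Your proposal is correct and is essentially the paper's own argument: the paper proves the same thing by noting that $\var[\hp]=\var[\bp]$ when $\sigma_b=\sigma_p$ and that $\var[\hp]$ is monotonically increasing in $\sigma_b^2$ with coefficient $\frac{1}{\na}-\frac{1}{\nb}>0$, which is exactly the content of your factorization $V_C - V_H = (\sigma_p^2-\sigma_b^2)\left(\frac{1}{\na}-\frac{1}{\nb}\right)$. Your explicit subtraction-and-factor presentation is, if anything, slightly more self-contained than the paper's two-line monotonicity argument, and your closing remark about comparing both designs at the same primary sample size $\na$ matches the paper's implicit convention.
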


Theorem~\ref{theorem:1} implies that the uncertainty introduced by $f_b$ can be compensated for by using more samples if and only if $\sigma_b^2 < \sigma_p^2$. However, the additional collection of samples is only economical for certain cost functions, and should in the general case be determined by comparing the TSC of the two designs. To determine the TSC of the Hybrid-Offset design we begin by deriving optimal sample sizes $\nhyb$ and $\hhyb$. By combining \eqref{eq:pvar} and \eqref{eq:varreq}, and solving for equality, the following trade-off between $\nb$ and $\na$ is derived
\begin{align}
\na = \frac{\sigma_b^2 + \sigma_a^2}{\frac{d^2}{\zeta_{\delta}^2}  - \frac{1}{\nb}(\sigma_p^2 - \sigma_b^2)}. \label{eq:hn}
\end{align}
Example trade-off curves demonstrate how the amount of primary annotations can be reduced by increasing the amount of auxiliary annotations (\figref{hn_levels}A). Note that if $\nb = \ntrad$, it follows from \eqref{eq:hn} that $\na = \ntrad$, and the Hybrid-Offset design reduces to the Conventional design. The optimal operating point along the $\nb, \na$ trade-off curve can be derived by minimizing the TSC. Using \eqref{eq:hn} to eliminate $\na$, the TSC becomes
\begin{align}
b(\na, \nb) = (\cc + c_b) \left( \nb + \frac{k(\sigma_b^2 + \sigma_a^2)}{\frac{d^2}{\zeta_{\delta}^2}  - \frac{1}{\nb}(\sigma_p^2 - \sigma_b^2)}  \right), \label{eq:tcd_offset}
\end{align}
where $k = \frac{c_a}{\cc+c_b}$ is the relative cost of $f_a$. The optimal sample size, $\nhyb$ is given by minimizing \eqref{eq:tcd_offset} under the constraint that $\nb \geq \ntrad$. This yields the following theorem:
\begin{Theorem}
\label{theorem:2}
Under a Hybrid-Offset design \eqref{eq:phat}, the optimal auxiliary annotation size is
\begin{align}
\nhyb = \max \left[ \frac{\zeta_{\delta}^2}{d^2} \left( \sigma_p^2 - \sigma_b^2 + \sqrt{k (\sigma_b^2+ \sigma_a^2) ( \sigma_p^2 - \sigma_b^2 )}~ \right), \ntrad \right] \label{eq:nopt}
\end{align}
The corresponding primary annotation size is given by \eqref{eq:hn}.
\end{Theorem}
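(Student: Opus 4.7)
The plan is to treat the TSC in \eqref{eq:tcd_offset} as a single-variable function of $\nb$ on the feasible set $\nb \geq \ntrad$, differentiate, and solve the first-order condition. The positive prefactor $(\cc + \cb)$ does not affect the minimizer, so writing $A = d^2/\zeta_{\delta}^2$, $B = \sigma_p^2 - \sigma_b^2$, and $C = k(\sigma_b^2 + \sigma_a^2)$, the objective reduces to $g(\nb) = \nb + C/(A - B/\nb)$. For $\nb \geq \ntrad = (\sigma_p^2 + \sigma_a^2)/A$ the denominator is strictly positive by \eqref{eq:ntrad}, so $g$ is smooth and positive on the feasible region.

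Differentiating yields $g'(\nb) = 1 - CB/(A\nb - B)^2$, using the identity $\nb^2(A - B/\nb)^2 = (A\nb - B)^2$. Setting $g'(\nb)=0$ and taking the positive root (legitimate since $A\nb - B > 0$, and $B > 0$ is the relevant regime by \thmref{1}) gives $A\nb - B = \sqrt{CB}$, which solves linearly to the interior stationary point $\nb^{\star} = (B + \sqrt{CB})/A$. Substituting back the definitions of $A$, $B$, $C$ produces exactly the first entry inside the $\max$ in \eqref{eq:nopt}.

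To confirm $\nb^{\star}$ is the global minimum and to handle the boundary constraint, I would change variables to $v = A - B/\nb \in (0,A)$, under which $\nb = B/(A-v)$ and $g = B/(A-v) + C/v$. Both summands are strictly convex on $(0,A)$, so $g$ is strictly convex in $v$ with unique minimizer $v^{\star}$; since the map $\nb \mapsto v$ is a strict monotone increasing bijection between $(B/A, \infty)$ and $(0,A)$, the function $g(\nb)$ is strictly quasi-convex with a unique minimizer at $\nb^{\star}$ and is monotonically increasing for $\nb > \nb^{\star}$. The constrained optimum over $\nb \geq \ntrad$ is therefore $\nb^{\star}$ when $\nb^{\star} \geq \ntrad$ and the boundary value $\ntrad$ otherwise, both cases captured by $\nhyb = \max(\nb^{\star}, \ntrad)$, which is \eqref{eq:nopt}. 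The corresponding $\hhyb$ follows by substituting $\nhyb$ into \eqref{eq:hn}.

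The main obstacle is really algebraic bookkeeping: selecting the correct (positive) root of $(A\nb - B)^2 = CB$ and tracking signs so that the stationary point sits in the feasible domain $\nb > B/A$. The substitution $v = A - B/\nb$ is the cleanest device to simultaneously sidestep a second-derivative computation, establish uniqueness of the critical point, certify that it is a minimum, and show that when it falls below $\ntrad$ the constrained optimum collapses to the boundary, which is what the outer $\max$ encodes.
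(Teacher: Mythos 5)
Your proposal is correct and follows essentially the same route as the paper: differentiate the TSC \eqref{eq:tcd_offset} with respect to $\nb$, solve the first-order condition for the positive root to obtain the interior stationary point, and handle the constraint $\nb \geq \ntrad$ via the outer $\max$. The only difference is cosmetic — the paper certifies optimality by computing the second derivative explicitly and showing it is positive on the feasible region (stated as a convexity lemma), whereas you establish the same conclusion through the substitution $v = A - B/\nb$, which yields strict convexity in $v$ and hence quasi-convexity in $\nb$; both devices deliver the uniqueness and boundary-collapse argument needed for \eqref{eq:nopt}.
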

Using the optimal sample sizes, the TSC of Hybrid-Offset sampling can be calculated from \eqref{eq:tcd_offset} and compared to the TSC of Conventional sampling in order to determine the most cost-effective design. For the important special case where $c_b = 0$ (which can occur e.g. if $f_b$ is a computer algorithm) the following theorem applies:
\begin{Theorem}
\label{theorem:3}
If $c_b = 0$, the TSC of Hybrid-Offset sampling is smaller than the TSC of Conventional sampling if and only if $k > \sigma_{\Delta}$, where $k = \frac{c_a}{\cc+c_b}$ and $\sigma_{\Delta} = \frac{\sigma_a^2 + \sigma_b^2}{\sigma_p^2 - \sigma_b^2}$. 
\end{Theorem}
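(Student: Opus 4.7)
The plan is to evaluate the Hybrid-Offset Total Sampling Cost at the optimum $\nhyb$ from Theorem~\ref{theorem:2} in closed form, compare directly to $(\cc+\ca)\ntrad$, and reduce the resulting comparison to a one-variable AM--GM inequality in $k$ and $\sigma_{\Delta}$.

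First I would plug the optimum into \eqref{eq:tcd_offset} using the first-order condition that produced Theorem~\ref{theorem:2} in the first place. Differentiating \eqref{eq:tcd_offset} in $\nb$ and setting the derivative to zero yields the identity $\frac{d^2}{\zeta_{\delta}^2} - \frac{1}{\nhyb}(\sigma_p^2-\sigma_b^2) = \frac{1}{\nhyb}\sqrt{k(\sigma_a^2+\sigma_b^2)(\sigma_p^2-\sigma_b^2)}$. This lets me eliminate the denominator inside \eqref{eq:tcd_offset} in one line, turning the bracket in \eqref{eq:tcd_offset} into $\nhyb\bigl(1 + \sqrt{k\sigma_{\Delta}}\bigr)$. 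Combined with the explicit form of $\nhyb$ from \eqref{eq:nopt} (unconstrained branch), and setting $\cb=0$ so the prefactor is $\cc$, the TSC collapses to $\cc\,\frac{\zeta_{\delta}^2}{d^2}(\sigma_p^2-\sigma_b^2)\bigl(1 + \sqrt{k\sigma_{\Delta}}\bigr)^{2}$.

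Second, I would put the Conventional TSC into comparable form. Using $\sigma_p^2+\sigma_a^2 = (\sigma_p^2-\sigma_b^2)(1+\sigma_{\Delta})$ gives $(\cc+\ca)\ntrad = \cc(1+k)\frac{\zeta_{\delta}^2}{d^2}(\sigma_p^2-\sigma_b^2)(1+\sigma_{\Delta})$. Cancelling the common positive factor $\cc\frac{\zeta_{\delta}^2}{d^2}(\sigma_p^2-\sigma_b^2)$ reduces the desired inequality to $(1 + \sqrt{k\sigma_{\Delta}})^{2} < (1+k)(1+\sigma_{\Delta})$. Expanding and cancelling $1 + k\sigma_{\Delta}$ from both sides leaves the clean statement $2\sqrt{k\sigma_{\Delta}} < k + \sigma_{\Delta}$, which by AM--GM is strict if and only if $k \neq \sigma_{\Delta}$.

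To land the exact direction of the inequality, I would verify which branch of the max in \eqref{eq:nopt} is active. A parallel manipulation shows the unconstrained value of $\nhyb$ exceeds $\ntrad$ iff $\sqrt{k\sigma_{\Delta}} > \sigma_{\Delta}$, i.e.\ iff $k > \sigma_{\Delta}$; in this regime the closed form above applies and the AM--GM step gives strict improvement. In the complementary regime $k \leq \sigma_{\Delta}$, Theorem~\ref{theorem:2} forces $\nhyb = \ntrad$, and \eqref{eq:hn} evaluated at $\nb=\ntrad$ then yields $\hhyb = \ntrad$, so Hybrid-Offset degenerates to the Conventional design and the two TSCs coincide. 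Combining the two cases gives the claimed iff. The step needing real care is the substitution of the first-order identity into \eqref{eq:tcd_offset}; after that, the proof is essentially one AM--GM application plus bookkeeping of the max branch.
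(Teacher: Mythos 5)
Your proof is correct, but it takes a genuinely different route from the paper's. The paper never evaluates the optimal TSC in closed form: it observes that \eqref{eq:tcd_offset} is strictly convex on $\nb \geq \ntrad$ (the lemma in Appendix~C), that the two designs coincide at the boundary point $\nb = \na = \ntrad$, and therefore that the Hybrid-Offset TSC is strictly smaller than the Conventional TSC if and only if the constrained minimizer is interior, i.e.\ $\nhyb > \ntrad$; the threshold $k = \sigma_{\Delta}$ then drops out of equating the two arguments of the max in \eqref{eq:nopt}. You instead substitute the first-order identity back into \eqref{eq:tcd_offset} to get the explicit optimal cost $\cc\,\tfrac{\zeta_{\delta}^2}{d^2}(\sigma_p^2-\sigma_b^2)\bigl(1+\sqrt{k\sigma_{\Delta}}\bigr)^{2}$, rewrite the Conventional cost as $\cc\,\tfrac{\zeta_{\delta}^2}{d^2}(\sigma_p^2-\sigma_b^2)(1+k)(1+\sigma_{\Delta})$, and reduce the comparison to the AM--GM inequality $2\sqrt{k\sigma_{\Delta}} < k+\sigma_{\Delta}$; the branch check (unconstrained minimizer exceeds $\ntrad$ iff $k>\sigma_{\Delta}$) is the one step shared with the paper, and it is needed in your argument too, since AM--GM alone would wrongly suggest strict improvement also for $k<\sigma_{\Delta}$ where the max clamps $\nhyb$ to $\ntrad$. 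The paper's route is shorter and reuses the convexity lemma already proved for Theorem~2; yours is more computational but buys something extra, namely a closed-form expression for the optimal Hybrid-Offset TSC and hence a quantitative ratio $\bigl(1+\sqrt{k\sigma_{\Delta}}\bigr)^{2}/\bigl((1+k)(1+\sigma_{\Delta})\bigr)$ for the cost savings, which the paper's argument does not produce. Both arguments implicitly assume $\sigma_a^2+\sigma_b^2>0$ and $\sigma_b^2<\sigma_p^2$ so that $\sigma_{\Delta}$ is positive and finite and the positive root of the first-order condition is the relevant one; that is consistent with the standing hypotheses of Theorems~1 and~2.
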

For example, if the primary annotation errors are zero ($\sigma_a^2=0$), and the auxiliary errors are half as large as the data variance ($\sigma_b^2 = \sigma_p^2 / 2$), then $\sigma_{\Delta} = 1$. This means that Hybrid-Offset sampling is cheaper than Conventional sampling if $k > 1$, which occurs if the cost of collection is smaller than the cost of primary annotation. The difference in TSC between the two sampling designs is shown in \figref{tcd}B for various values of $k, \sigma_p$ and $\sigma_b$.

Hybrid-Offset sample sizes can also be derived directly from a target TSC, $b$, and costs, $c_a, c_b$ and $c_c$ by minimizing \eqref{eq:pvar} under the TSC constraint: $b = \na c_a + \nb (c_b + c_c)$. This yields:
\begin{align}
n_b =  \frac{b \sqrt{c_a(c_b+c_c) \sigma_{\Delta}}}{c_a(c_b+c_c) \sigma_{\Delta} - (c_b+c_c)^2}  - \frac{b}{c_a\sigma_{\Delta} - (c_b+c_c)}, \label{eq:sample_sizes_from_budget}
\end{align}
where as previous, $\sigma_{\Delta} = \frac{\sigma_a^2 + \sigma_b^2}{\sigma_p^2 - \sigma_b^2}$. The corresponding $n_a$ is given by the TSC constraint.

%%% NH CURVES
\begin{figure}[t]
   \begin{center}
    \includegraphics[height=0.29\linewidth]{\figuredir 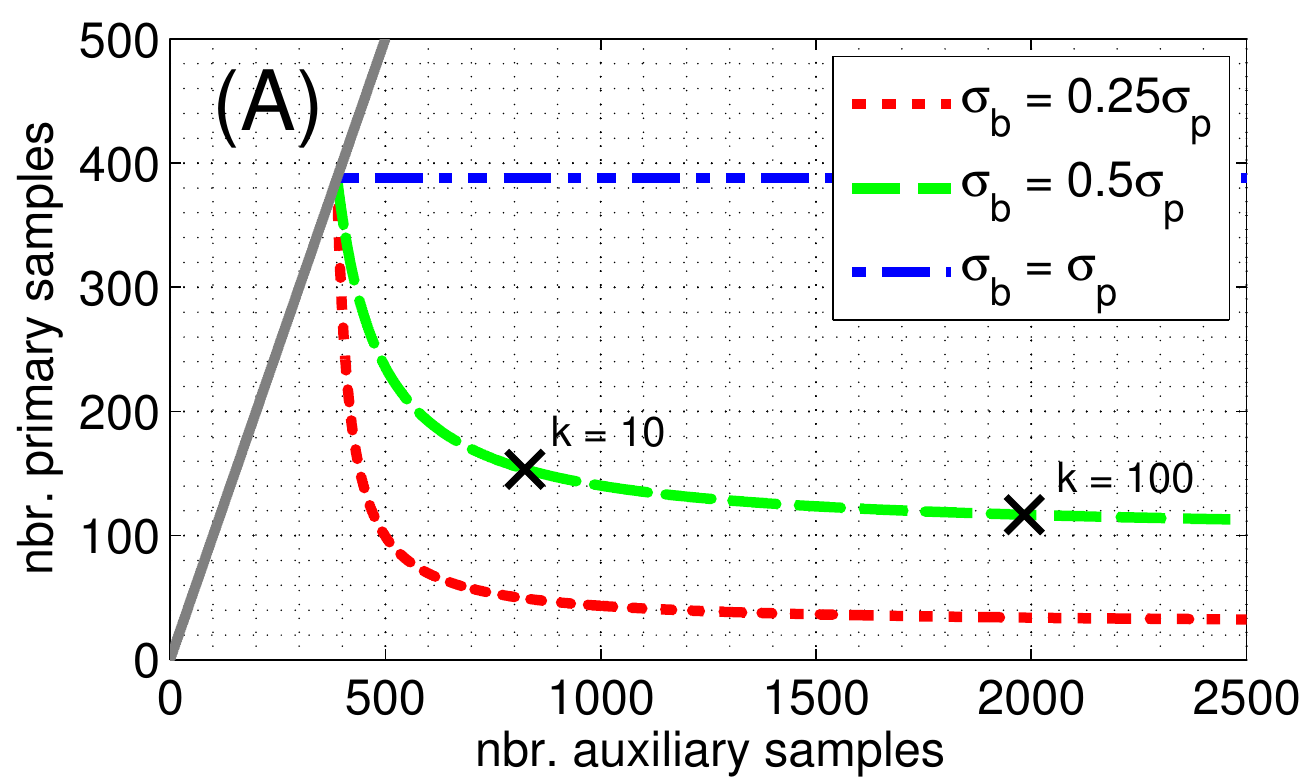}
    \includegraphics[height=0.29\linewidth]{\figuredir 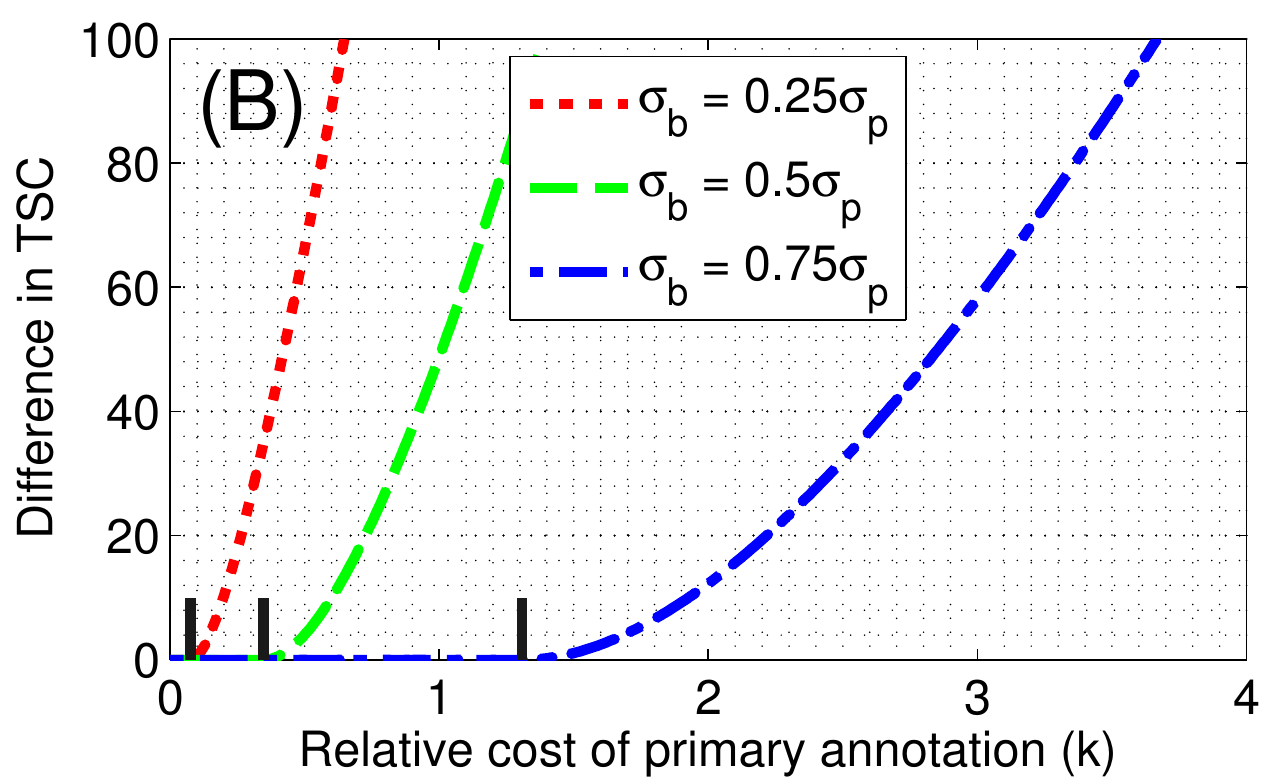}
    \end{center} \vspace{-4mm}
   \caption{\textbf{(A)} Amount of auxiliary ($\nb$), and primary ($\na$) annotation required to achieve error $d \leq 0.02$ at $95\%$ confidence for $\sigma_p = 0.2$, $\sigma_a = 0.02$, and $\sigma_b = \{0.25\sigma_p, 0.5\sigma_p, \sigma_p\}$ under the Hybrid-Offset sampling design. Solid gray line indicates $\nb = \na$. Optimal operating points \eqref{eq:nopt}, for relative cost of primary annotation, $k = 10$ and $k = 100$ are marked with X on the $\sigma_b = 0.5\sigma_p$ curve. \textbf{(B)} Difference in Total Sampling Cost ($\text{TSC}_{\text{Conventional}} - \text{TSC}_{\text{offset}}$) for $\sigma_p = 0.2$, $\sigma_a = 0.02$, $\sigma_b = \{0.25\sigma_p, 0.5\sigma_p, 0.75\sigma_p\}$, $c_b = 0$ and $c_c = 1$, as a function of the relative cost of primary annotation, $k = c_a / (c_c + c_b)$. The threshold costs, $\sigma_{\Delta}$ are marked with black stumps. If $k < \sigma_{\Delta}$ the sampling designs and costs are identical.} \vspace{-2mm}
   \label{fig:hn_levels}
      \label{fig:tcd}
\end{figure}

\section{Experiments}
The proposed method is discussed in the context of an annual coral reef survey performed by the Moorea Coral Reef Long Term Ecological Research (MCR-LTER) program (http://mcr.lternet.edu). The program surveys six sites across the island of Moorea in French Polynesia. At each site, three habitats are surveyed: the fringing-reef and two habitats on the outer reef at 10 and 17 meter depth respectively, for a total of 18 sampling ``units''. In each unit the goal, as dictated by the ecologists, is to estimate the percent cover of key benthic substrates, such as coral and algae. These data provide important information about the ecology when compared across sites, habitats and years. To estimate the percent cover for each unit, ecologists capture photographs (\emph{in-situ} by a research-diver) at random locations along five line-transects at each site. The photographs are then annotated in order to estimate the percent cover for each photograph. This is done through random point sampling in which the substrate is identified at $200$ random point locations in each photograph~\cite{pante2012getting}. This procedure of using random sampling to annotate each collected random sample is commonly referred to as two-stage sampling~\cite{sampling, deming1966some}. 

For the purpose of this discussion we will focus on the estimation quality of percent coral cover for each unit, and investigate the effect of using an Hybrid-Offset design in place of the Conventional designs currently in use. For the simulations we use the Moorea Labeled Corals dataset, which is publicly available\footnote{ \label{note1} http://vision.ucsd.edu/content/moorea-labeled-corals} and contains the full-resolution images and annotations from the LTER-MCR surveys conducted 2008 and 2009. We use the data from 2008 to train the auxiliary annotators, and estimate the sampling parameters, and the data from 2009 to run the sampling simulations.

As described above, each photograph, $x_i$, is a random sample with a corresponding coral cover $y_i \in [0, 1]$, and $n_a = 40$ in each unit. The coral cover estimated by the expert annotator, $f_a(x_i)$ is highly accurate~\cite{ninio2003estimating}, and we therefore use $\sigma_a^2 = 0$ in the simulations below. We do not account for approximation errors introduced by the second-stage (point-annotation) method, as this have been shown to have very limited effect on the final mean estimator~\cite{deming1966some,pante2012getting}. Manual annotation requires $\sim 10$ minutes per image to complete, while collection is quicker, with the $40$ samples in a sampling-unit captured in a single 40 minute dive. With these parameters, the TSC for each unit is approximately $40(10+1)$ minutes, or 7.3 person-hours. 

\textbf{Auxiliary annotators:} We use two auxiliary annotators, the ``texton''-based classifier proposed by~\cite{beijbom2012automated} which is publicly available, and a novel annotator based on convolutional neural networks (CNN)~\cite{girshick2014rich}. Both of these methods operate on $p \times p$ pixel images patches, and are denoted $f_{b^*}^{(\Psi)}(x_{i,j}): \mathbb{R}^{p \times p \times 3} \rightarrow \{0, 1\}$, for $\Psi \in \{ \textrm{texton}, \textrm{CNN} \}$, where $x_{i,j}$ is an patch from image $i$ around random point-location $j$, and where output 1 indicates `coral', and 0 `other'. The output of the auxiliary annotator, $f_b^{(\Psi)}(x_i)$ is given by averaging the point-classifications for the $200$ random points in each image:
\begin{align}
f_b^{(\Psi)}(x_i) = \frac{1}{200} \sum_{j=1}^{200} f_{b^*}^{(\Psi)}(x_{i,j}). \label{eq:fbstar}
\end{align}
To develop the CNN based point-classifier, \fCNN, we adopted a 16 layer CNN model developed for image classification~\cite{simonyan2014very}. The VGG16 model is publicly available\footnote{https://github.com/BVLC/caffe/wiki/Model-Zoo} and operates on $226 \times 226$ RGB images. To fine-tune this model for coral classification, we cropped $226 \times 226$ patches from the 2008 images at each of the 200 annotated point locations. These cropped patches were used together with mirrored and rotated (by $90, 180$ and $270$ degrees) versions as training-data to fine-tune the weights of the VGG16 model largely following the procedure proposed by~\cite{girshick2014rich}. Classification was performed by cropping patches from the test images, and propagating them through the network.

\textbf{Cost analysis:} Using the data from 2008, the sample variance of the percent coral cover $\hat{\sigma}_p$ was $0.16 \pm 0.1$ (mean $\pm$ SD, $n=18$) across the units, meaning that for an average unit, a 95\% confidence interval of mean coral cover, from the $n_a=40$ samples, is 5.8\%. By cross-validation on the data from 2008, the auxiliary annotator errors of \ftexton, $\hat{\sigma}_b$, was estimated as $0.047 \pm 0.024$ (mean $\pm$ SD, $n=18$). Since $\hat{\sigma}_p < \hat{\sigma}_b$ the Hybrid-Offset design is likely more cost-effective than a Conventional design (\thmref{1}). Indeed, with $d=5.8\%, \delta = 0.05$, the optimal Hybrid-Offset design is to collect 53 samples and manually annotate 5 (\thmref{2}), for a TSC of 2.3 person-hours per unit; a 68.7\% reduction compared to the Conventional design.

\textbf{Simulation details:} In order to validate the expected cost-savings, simulations were carried out on the coral survey images collected and annotated during 2009. Using the variance estimates from the 2008 data, sample sizes were determined for TSC, $b = 1, 2, \hdots 15$ person-hours using Eq. \eqref{eq:sample_sizes_from_budget}. For each of the 18 units, for each budget, and for $500$ iterations, the required number of images were drawn randomly with replacement from the images pertaining to that unit\footnote{Sampling with replacement was used to avoid finite-population artifacts due to the limited pool of 40 images per unit. In an actual application, there would be a large number of locations from where to capture the images and finite-population correction would not be needed.}, and mean estimates $\hat{\mu}_p$ were calculated. From these estimates the mean error (bias): $\frac{1}{500} \sum_{k=1}^{500} \hat{\mu}_p - \mu_p^0$ and mean absolute error (MAE): $\frac{1}{500} \sum_{k=1}^{500} |\hat{\mu}_p - \mu_p^0|$ of each method was calculated by comparing to the ``ground-truth'' cover $\mu_p^0$, which was estimated from the expert annotations of the $40$ images in the unit. In addition to the Conventional and Hybrid-Offset designs, we include three other designs, which are defined below. All sampling designs are evaluated using both \ftexton and \fCNN.

\textbf{Hybrid-Ratio Design:} Auxiliary information is commonly incorporated using a ratio-estimator which assumes a linear relation between auxiliary values and the values of interest~\cite{sampling, royall1981empirical}. The mean estimate of a ratio-estimator is  $\hat{\mu}_p = \frac{1}{n_b} \sum_{i = 1}^{n_b} \hat{r} f_b(x_i)$, where the ratio, $r$ is estimated as 
\begin{align}
\hat{r}= \frac{\sum_{i=1}^{n_a} f_a(x_i)}{\sum_{i=1}^{n_a} f_b(x_i)} \label{eq:ratio}
\end{align}
The ratio-estimator is design-biased, and there are several approximations of the ratio-estimator variance~\cite{royall1981empirical}. We do not analysis the ratio-estimator, but include it in the simulations using a Hybrid-Ratio design with the same sample sizes as the Hybrid-Offset design \eqref{eq:sample_sizes_from_budget}. We set $\hat{r} = 1$ if the nominator or denominator of \eqref{eq:ratio} is zero in order to avoid ill-defined estimates of $r$.

\textbf{Auxiliary Design:} Using only the auxiliary annotator, $\p$ can be estimated as $\hat{\mu}_p = \frac{1}{n_b} \sum_{i=1}^{n_b} f_b(x_i)$, with sample sizes $n_a = 0, n_b = b/c_c$. This estimator is not design-unbiased, but will have small variance since a large number of samples can be collected and annotated cheaply.

\textbf{Auxiliary Bias-Corrected Design}: The expected value of $f_{b^*}(x)$ for a sample $x$ with value $y \in \{0, 1\}$ is
\begin{align}
  E \left[ {\begin{array}{c}
   f_{b^*}(x) \\
   1 - f_{b^*}(x) \\
  \end{array} } \right] = 
   \left[ {\begin{array}{cc}
   \alpha & 1-\beta \\
   1 - \alpha & \beta \\
  \end{array} } \right]
  \left[ {\begin{array}{c}
   y \\
   1 - y \\
  \end{array} } \right], \label{eq:solownew}
\end{align}
where $\alpha$ and $\beta$ are the classifier sensitivity and specificity respectively. As noted by~\cite{solow2001estimating, hopkins2010method}, an unbiased estimate of $y$ is given by inverting the confusion-matrix in the center of \eqref{eq:solownew} yielding: $\tilde{y} = \frac{f_{b^*}(x) + \beta -1}{\alpha + \beta - 1}$. We denote this operation as ``bias-correction'', since the corrected values are unbiased in expectation. Since $f_b$ is a linear combination of $f_{b^*}$ \eqref{eq:fbstar}, $\p$ can be estimated as $\hat{\mu}_p = \frac{1}{n_b} \sum_{i=1}^{n_b} \frac{f_b(x_i) + \beta -1}{\alpha + \beta - 1}$. Using cross-validation on the data from 2008, specificity and sensitivity were estimated as $\hat{\alpha} = 0.738, \hat{\beta} = 0.963$ for $f_{b^*}^{(\textrm{texton})}$, and $\hat{\alpha} = 0.740, \hat{\beta} = 0.968$ for $f_{b^*}^{(\textrm{CNN})}$, and used for the bias-correction. An analysis of the variance of a mean estimator from abundance-corrected values is provided in Appendices~\ref{sec:transfer-sampling} \& \ref{sec:twostage}.

%%% SIMULATION RESULTS
\begin{figure}[t]
   \begin{center}
    \includegraphics[width=0.95\linewidth]{\figuredir 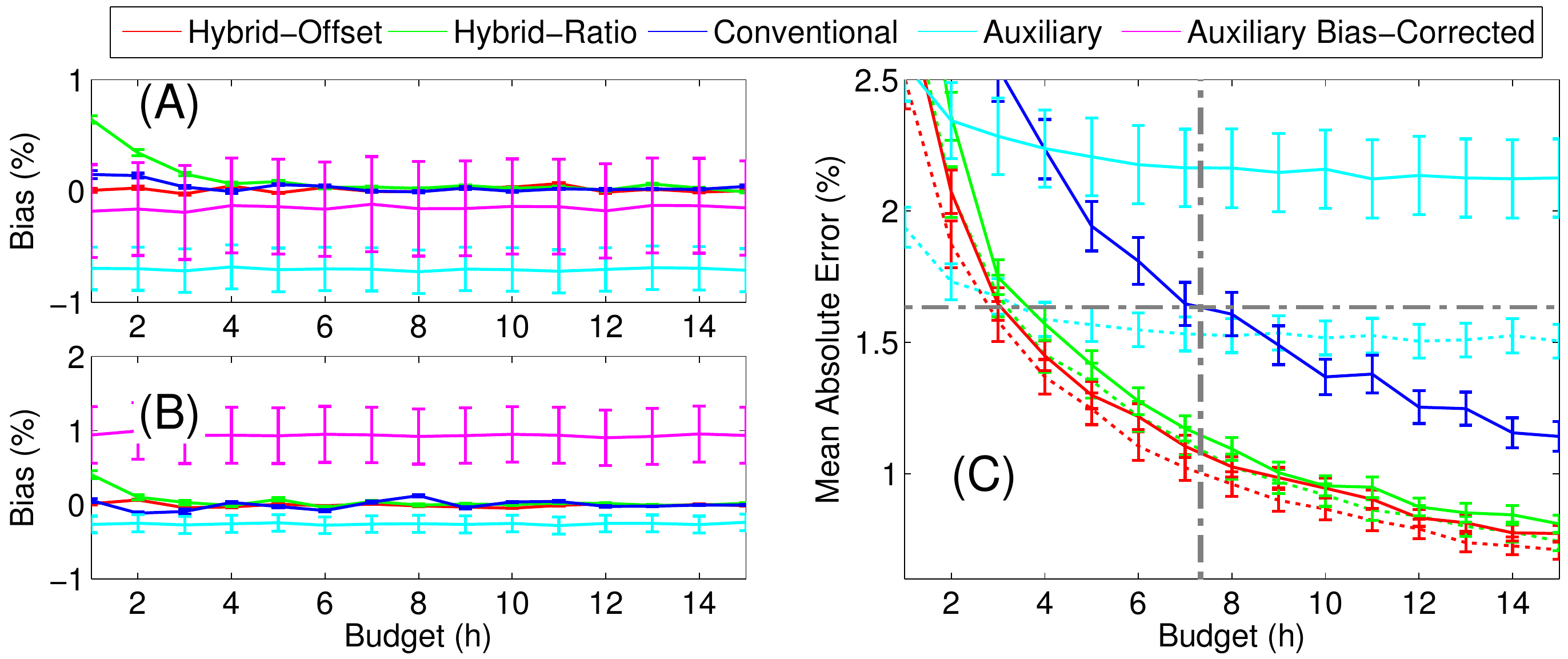}
    \end{center} \vspace{-4mm}
    \caption{\textbf{Simulations results on the coral reef survey data.} Results displayed as mean $\pm$ SE for $n=18$ sampling units and for TSC (budgets) between 1 and 15 person-hours per unit. Estimator bias is shown for the \ftexton auxiliary annotator in panel \textbf{(A)}, and for \fCNN in panel \textbf{(B)}. Mean Absolute Error (MAE) is shown in panel \textbf{(C)}, where estimates using \ftexton are indicated with solid lines, and \fCNN with dotted. Gray dash-dotted lines indicate the current operating point of the Conventional sampling design currently utilized by the ecological monitoring program. The MAEs of the Bias-Corrected estimates were $>3\%$ (shown in \figref{supp_results}).} \vspace{-2mm}
   \label{fig:simulations}
\end{figure}

\section{Results \& Discussion}
As expected, the Conventional and Hybrid-Offset designs were unbiased (\figref{simulations}A). The Hybrid-Ratio design also had low bias, except for smaller budgets, where, as mentioned above, the ratio-estimator \eqref{eq:ratio} may be ill-posed. The Auxiliary Bias-Corrected design was, in-fact, biased (\figref{simulations}A, B). This may seem surprising, but the corrected estimates are only unbiased if the sensitivity, $\alpha$ and specitivity, $\beta$ are known~\cite{solow2001estimating, hopkins2010method}. These results thus indicate that $\hat{\alpha}$ and $\hat{\beta}$ estimated from the 2008 data were not valid for the sampling units from 2009. This may be due to domain-shifts, which can severely affect the performance of machine-learning based classifieres~\cite{torralba2011unbiased, pan2010survey}. One way to circumvent this problem is to use Bias-Correction in a Hybrid design, and estimate $\hat{\alpha}$ and $\hat{\beta}$ from the $n_a$ samples annotated by both annotators. However, as shown in \appref{transfer_comparison}, such Hybrid-Bias-Correction design is inferior to the Hybrid-Offset design. Finally, and less surprisingly, the un-corrected Auxiliary design was biased, although to a lesser extent for \fCNN (\figref{simulations}A, B).

The MAE of the Conventional design at the $7.3$ person-hour budget currently utilized by the MCR-LTER program was $1.63 \pm .32\%$ (mean $\pm$ SE, $n = 18$; \figref{simulations}B). This was outperformed by the Hybrid-Offset and Hybrid-Ratio designs, which utilized both annotators. At the $7.3$ person-hour budget, the Hybrid-Offset estimator MAE was $1.08 \pm .16\%$ when relying on \ftexton and $1.00 \pm .18\%$ when relying on \fCNN, which is significantly lower than for the Conventional design. Conversely, the MAE that the Conventional design achieved at $7.3$ person-hours, can be achieved by the Hybrid-Offset design at around $3.5$ person-hours; a $\sim 50\%$ cost reduction. The Hybrid-Ratio design, while comfortably better than the Conventional design, performed worse than the Hybrid-Offset design for all budgets. This may be becuase the sample sizes were optimate for the offset-estimator and not the ratio-estimator. The Auxilary and Auxiliary Bias-Corrected designs, which both relied only on the auxiliary annotator, performed weaker than the Hybrid and Conventional designs. The MAE of the Bias-Corrected design was $>3\%$ for all budgets, indicating that the correction method of \cite{solow2001estimating, hopkins2010method} was ineffective. The Auxiliary design performed poorly when relying on \ftexton (MAE $>2\%$), but better when relying on \fCNN, barely outperforming the Conventional design at the $7.3$ person-hour budget (\figref{simulations}C).

It is also clear from the simulations that \fCNN outperformed \ftexton, reducing the errors for the Hybrid and Auxiliary estimators. This is expected as CCN-based methods have recently achieved state of the art performance on several visual recognition tasks~\cite{girshick2014rich}. As new, and stronger classification methods are developed, the requirement of \thmref{1} will be satisfied for an increasing number of applications, suggesting increasing utility of the offset sampling design.

We have used linear cost-functions throughout this work, with a fixed cost per sample. In reality, the cost per sample is likely to decrease when more samples are collected. This is true, in particular for spatial surveys where the sample collector will, on average, have shorter distance to travel between the samples. Since Hybrid sampling designs require larger number of collected samples, the cost-savings estimated by our simulations should be considered a lower-bound on the actual cost-savings.

\section{Conclusion}
We have investigated the implications of modeling and incorporating the cost and accuracy of two annotators in random sampling designs for population mean estimation, and shown that significant cost-savings are possible using data from a coral reef survey. However, the derived formulations are general and applicable in many other situations. These includes other marine surveys of e.g. fish~\cite{willis2000baited} or plankton~\cite{olson2007submersible}, terrestrial surveys of crops~\cite{hunt2010acquisition}, forests~\cite{getzin2012assessing}, rangelands~\cite{rango2009unmanned} and deserts~\cite{archer1988autogenic}; and audio-based surveys of e.g. marine mammal or bird populations~\cite{johnson2003digital}. 

To the best of our knowledge, this is the first work that models automated annotation as part of the sampling design and we believe that there are several interesting directions of future work, notably with respect to stratified and sequential sampling procedures.

\bibliographystyle{abbrv}
\bibliography{vs_nips}

%%%%%%%%%% Merge with supplemental materials %%%%%%%%%%
\pagebreak
\begin{widetext}
\begin{center}
\textbf{\Large Appendices: Random Sampling in an Age of Automation: Minimizing Expenditures through Balanced Collection and Annotation}
\end{center}
%%%%%%%%%% Merge with supplemental materials %%%%%%%%%%
%%%%%%%%%% Prefix a "S" to all equations, figures, tables and reset the counter %%%%%%%%%%
\setcounter{equation}{0}
\setcounter{figure}{0}
\setcounter{table}{0}
\setcounter{page}{1}
\makeatletter
\renewcommand{\theequation}{S\arabic{equation}}
\renewcommand{\thefigure}{S\arabic{figure}}
\renewcommand{\bibnumfmt}[1]{[#1]}
\renewcommand{\citenumfont}[1]{#1}
%%%%%%%%%% Prefix a "S" to all equations, figures, tables and reset the counter %%%%%%%%%%
\appendix
\section{Variance of the offset estimator}
\label{sec:offset_variance}
To derive the variance of $\hp$ we begin by expanding out and separating all terms,
\begin{align}
\hp &=  \frac{1}{\nb} \sum_{i = 1}^\nb f_b(x_i) - \hat{\mu}_b. \\
& = \frac{1}{\nb} \sum_{i = 1}^\nb f_b(x_i) - \frac{1}{\na} \sum_{i = 1}^\na f_b(x_i) - f_a(x_i) \\
& = \frac{1}{\nb} \sum_{i = 1}^\nb y_i + \frac{\na  -\nb}{\na \nb} \sum_{i = 1}^\na \epsilon_{b,i} + \frac{1}{\nb}  \sum_{i = \na+1}^\nb \epsilon_{b,i} + \frac{1}{\na}  \sum_{i = 1}^\na \epsilon_{a,i},
\end{align}
after which the variance of $\hp$ is given by:
\begin{align}
\var[\hp] =  &\frac{1}{\nb} (\sigma_p^2 - \sigma_b^2 ) + \frac{1}{\na} (\sigma_a^2  + \sigma_b^2) + \frac{2}{\nb}\sigma_{a, p} + 2 \frac{\na - \nb}{\na \nb} \sigma_{a, b}.
\end{align}
The first two terms are the variance of the data and annotator errors. The third term is the covariance between the primary annotator errors and the sample values $\sigma_{a,p}$. Since, by assumption, the primary annotator is ``accurate'', we can expect this term to be small, and can be omitted. The last term accounts for the covariance between the annotator errors $\sigma_{a, b}$. Since the two annotators are operating independently, this is assumed to be zero, and it can also be omitted. Interestingly, the third covariance term, $\sigma_{p, b}$, between the auxiliary annotator and the actual values, which may be significant, cancels out and does not affect the final expression.

\section{Proof of Theorem 1}
\begin{proof}
It follows from \eqref{eq:pvar_breve} and \eqref{eq:pvar} that if $\sigma_b = \sigma_p \Rightarrow \var[\hp] = \var[\bp]$. Since $\var[\hp]$ increases monotonically with increasing $\sigma_b$ (since $\nb > \na$) the theorem follows trivially. 
\end{proof}

\section{Proof of Theorem 2}
\label{sec:tcd_offset_convex}
We start with the following lemma:
\begin{Lemma}
The cost function of offset sampling \eqref{eq:tcd_offset} is convex for $\sigma_b^2 < \sigma_p^2$, $\nb \geq \ntrad$ where $\ntrad$ is given by \eqref{eq:ntrad}.
\end{Lemma}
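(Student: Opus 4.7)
The plan is to exploit the fact that, along the trade-off curve \eqref{eq:hn}, $\na$ is a function of $\nb$, so the cost expression \eqref{eq:tcd_offset} is in fact a one-variable function of $\nb$. Convexity therefore reduces to checking that the second derivative with respect to $\nb$ is nonnegative on the admissible set $\{\nb \geq \ntrad\}$.

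First I would strip the positive prefactor $(\cc + c_b)$, which cannot affect convexity, and introduce the shorthand $A := k(\sigma_a^2 + \sigma_b^2) > 0$, $B := d^2/\zeta_\delta^2 > 0$, and $C := \sigma_p^2 - \sigma_b^2 > 0$ (the last inequality is exactly the hypothesis $\sigma_b^2 < \sigma_p^2$). With this notation the reduced cost is
\begin{equation*}
g(\nb) \;=\; \nb \;+\; \frac{A}{\,B - C/\nb\,} \;=\; \nb \;+\; \frac{A\,\nb}{B\nb - C}.
\end{equation*}
A short partial-fraction rewrite gives $g(\nb) = \nb + \tfrac{A}{B} + \tfrac{AC}{B(B\nb - C)}$, so only the last summand is nonlinear.

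Next I would differentiate twice, obtaining $g''(\nb) = \dfrac{2ABC}{(B\nb - C)^{3}}$. Since $A, B, C > 0$, the sign of $g''$ is determined entirely by whether $B\nb - C$ is positive on the domain. This is the step I see as the real content of the lemma, and the place where the constraint $\nb \geq \ntrad$ is used. Plugging in $\ntrad = (\sigma_p^2 + \sigma_a^2)/B$ from \eqref{eq:ntrad} gives
\begin{equation*}
B\,\ntrad - C \;=\; (\sigma_p^2 + \sigma_a^2) - (\sigma_p^2 - \sigma_b^2) \;=\; \sigma_a^2 + \sigma_b^2 \;>\; 0,
\end{equation*}
and since $B\nb - C$ is increasing in $\nb$, the denominator is strictly positive on the whole admissible region. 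Hence $g''(\nb) > 0$ there, so $g$ is strictly convex, and multiplying by the positive constant $(\cc + c_b)$ preserves convexity of $b(\na, \nb)$ along the trade-off curve.

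The potential obstacle I would watch for is the boundary: at $\nb = C/B$ the denominator vanishes and the cost blows up, reflecting the fact that auxiliary annotations alone cannot meet the error target when $\sigma_p^2 - \sigma_b^2$ is too large relative to $\nb$. The hypothesis $\nb \geq \ntrad$ keeps us safely bounded away from this pole, which is exactly why the domain restriction is included in the lemma statement. Once this point is verified, the rest is a routine second-derivative computation, and Theorem~\ref{theorem:2} then follows by setting $g'(\nb) = 0$ subject to $\nb \geq \ntrad$.
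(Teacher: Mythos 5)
Your proposal is correct and follows essentially the same route as the paper: compute the second derivative of the cost in $\nb$, observe that its sign is governed by the cube of $\nb\,d^2/\zeta_\delta^2 - (\sigma_p^2-\sigma_b^2)$, and use $\nb \geq \ntrad$ together with $\sigma_b^2 < \sigma_p^2$ to show that quantity is positive (equal to $\sigma_a^2+\sigma_b^2$ at the boundary, with the same implicit caveat that if $\sigma_a^2=\sigma_b^2=0$ the cost is affine and hence still convex). The partial-fraction rewrite is a tidy way to organize the differentiation, but it does not change the substance of the argument.
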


\begin{proof}
We begin by deriving the first and second derivatives of \eqref{eq:tcd_offset}:
\begin{align}
\frac{\partial t(\na, \nb)}{\partial \nb} &= (\cc + \cb)\left(1 - \frac{k(\sigma_b^2 + \sigma_a^2)(\sigma_p^2 - \sigma_b^2)}{(\nb \frac{d^2}{\zeta_{\delta}^2} - (\sigma_p^2 - \sigma_b^2))^2} \right)
\end{align}
\begin{align}
\frac{\partial^2 t(\na, \nb)}{\partial \nb^2} &= (\cc + \cb) \left( \frac{ 2k(\sigma_b^2 + \sigma_a^2)(\sigma_p^2 - \sigma_b^2)(\frac{d^2}{\zeta_{\delta}^2})}{(\nb \frac{d^2}{\zeta_{\delta}^2} - (\sigma_p^2 - \sigma_b^2))^3} \right).
\end{align}
Since by assumption $\sigma_b^2 < \sigma_p^2$, the second derivative is positive for 
\begin{align}
\nb \geq \ntrad > \frac{\zeta_{\delta}^2}{d^2} (\sigma_p^2 - \sigma_b^2),
\end{align}
where the strict inequality requires either $\sigma_a^2$ or $\sigma_b^2$ to be non-zero. This concludes the proof.
\end{proof}
Since, according to Theorem~\ref{theorem:1}, offset sampling should only be considered if $\sigma_b^2 < \sigma_p^2$, and since by design, $\nb \geq \ntrad$, \eqref{eq:tcd_offset} is convex, and Theorem~\ref{theorem:2} follows by setting the first derivative to zero and solving for $\nb$.

\section{Proof of Theorem 3}
\label{sec:kzero}
\begin{proof}
The TSC of conventional sampling, under the assumption that $c_b = 0$, is given by $\na(\cc + \ca)$ and the TSC of offset sampling is given by $\nb \cc + \na \ca$. As noted previously, if $\nhyb = \hhyb = \ntrad$, offset sampling reduces to conventional sampling and the TSCs are equal. Since $\nhyb$ minimize the offset TSC, which is convex for $\nb \geq \ntrad$ (Appendix \ref{sec:tcd_offset_convex}), it follows that the TSC of offset sampling is smaller than the TSC of conventional sampling if and only if $\nhyb > \ntrad$. The threshold, $\sigma_{\Delta}$ for when this occurs can be calculated by equating the two arguments inside the max operator of \eqref{eq:nopt} and solving for $k$:

\begin{align}
%\ntrad &= \frac{\zeta_{\delta}^2}{d^2} \left( \sigma_p^2 - \sigma_b^2  + \sqrt{\sigma_{\Delta} (\sigma_b^2+ \sigma_a^2) ( \sigma_p^2 - \sigma_b^2 )}~ \right) \\
\frac{\zeta_{\delta}^2}{d^2} (\sigma_p^2 + \sigma_a^2) &= \frac{\zeta_{\delta}^2}{d^2} \left( \sigma_p^2 - \sigma_b^2  + \sqrt{\sigma_{\Delta} (\sigma_b^2+ \sigma_a^2) ( \sigma_p^2 - \sigma_b^2 )}~ \right) \Rightarrow \\
%\sigma_p^2 + \sigma_a^2 &=  \sigma_p^2 - \sigma_b^2  + \sqrt{\sigma_{\Delta} (\sigma_b^2+ \sigma_a^2) ( \sigma_p^2 - \sigma_b^2 )} \Rightarrow \\
\sigma_a^2 + \sigma_b^2 &=  \sqrt{\sigma_{\Delta} (\sigma_b^2+ \sigma_a^2) ( \sigma_p^2 - \sigma_b^2 )} \Rightarrow \\
\sigma_{\Delta} &= \frac{\sigma_a^2 + \sigma_b^2}{\sigma_p^2 - \sigma_b^2}.
\end{align}
\end{proof}

\section{Random Sampling with Abundance Correction}
\label{sec:transfer-sampling}
A third sampling design can be defined under one critical additional assumption. This assumption, which we will denote the `population drift assumption', is that the performance of $f_b$ can be defined in terms of a matrix of confusion, $Q$ (which excludes real-valued output spaces ${\cal Y}$), and that $Q$ is known \emph{a priori} for the data to be sampled.

For these derivations we will let ${\cal Y} = \{0, 1\}$, meaning that the samples $y$ are drawn from $\ber(\p)$, a Bernoulli distribution with mean $\p$. This corresponds to annotating each sample $x_i$ as containing or not containing the quantity of interest. In \appref{twostage} we derive the statistics of auxiliary sampling in a two-stage sampling design~\cite{sampling, deming1966some}, where each sample $x_i$ is annotated by second stage sampling, from which the corresponding $y_i$ is obtained.

A matrix of confusion, $Q$ characterizes the misclassification rates of $f_b$. In the binary case, $Q$ is a two by two matrix
\[
   Q=
  \left[ {\begin{array}{cc}
   \alpha & 1-\beta \\
   1 - \alpha & \beta \\
  \end{array} } \right],
\]
where $\alpha$ is the sensitivity and $\beta$ the specificity.  As noted independently by~\cite{solow2001estimating, hopkins2010method} $Q$ can be used to create an unbiased estimate of $y$, and we begin by recalling this procedure. The expected value of the auxiliary annotation $f_b$ of a sample $x$ with value $y$ is
\[
  E \left[ {\begin{array}{c}
   f_b(x) \\
   1 - f_b(x) \\
  \end{array} } \right] = 
  Q
  \left[ {\begin{array}{c}
   y \\
   1 - y \\
  \end{array} } \right],
\]
and an unbiased estimator of $y$ is given by inverting $Q$
\begin{align}
\tilde{y} = \frac{f_b(x) + \beta -1}{\alpha + \beta - 1}. \label{eq:solow}
\end{align}
We refer to this as the `abundance corrected' value, and derive a sampling procedure based on this correction. The variance of $\tilde{y}_i$, given the true value $y_i$ is 
\begin{align}
\var[\tilde{y}_i | y_i] = \frac{\var[f_b(x_i) | y_i]}{(\alpha + \beta - 1)^2}, \label{eq:varyhat}
\end{align}
which follows directly from \eqref{eq:solow}. We also note that
\begin{align}
\var[f_b(x_i) | y_i] = y_i \alpha (1-\alpha) + (1-y_i) (1-\beta) \beta, \label{eq:varf}
\end{align}
since if $y_i = 1, f_b(x_i) \sim \ber(\alpha)$, and if  $y_i = 0, f_b(x_i) \sim \ber(1-\beta)$.
Combining \eqref{eq:varyhat} and \eqref{eq:varf}, yields
\begin{align}
\var[\tilde{y}_i | y_i] = \frac{y_i \alpha (1-\alpha) + (1-y_i) (1-\beta) \beta}{(\alpha + \beta - 1)^2}.
\end{align}
Finally, $\var[\tilde{y}_i]$ is given by the law of total variance
\begin{align}
\var[\tilde{y}_i] = & E\left[ \var(\tilde{y_i} | y_i) \right] + \var\left[E(\tilde{y}_i | y_i)\right] \\
		= & \sigma_s^2 + \sigma_p^2,
\end{align}
where $\sigma_p^2$ is the data variance and
\begin{align}
\sigma_s^2 = \frac{\p \alpha (1-\alpha) + (1-\p) (1-\beta) \beta}{(\alpha + \beta - 1)^2}, \label{eq:vars}
\end{align}
the variance introduced by the abundance correction. If the classifier is balanced, i.e. $\alpha = \beta$, \eqref{eq:vars} simplifies to $ \sigma_s^2 = \frac{\alpha (1-\alpha)}{(2\alpha - 1)^2}$. Since $\tilde{y}_i$ is an unbiased estimator of $y_i$, we can achieve an unbiased estimation of $\p$ as
\begin{align}
\tp = \frac{1}{\nb} \sum_{i=1}^\nb \tilde{y}_i, \label{eq:psolow}
\end{align}
with variance, assuming that $\sigma_s^2$ and $\sigma_p^2$ are uncorrelated
\begin{align}
\var[\tp] = \frac{1}{\nb}(\sigma_s^2 + \sigma_p^2),
\end{align}
and sample size
\begin{align}
\ntrans = \frac{\zeta_{\delta}^2}{d^2} (\sigma_s^2 + \sigma_p^2). \label{eq:nsolow}
\end{align}
Finally, the TSC, since $\na = 0$, is given by
\begin{align}
t(\na, \nb) = (\cc + c_b)\ntrans. \label{eq:tcd_transfer}
\end{align}
The auxiliary sampling design requires annotation of $\ntrans$ samples by the auxiliary annotator, but as it does not require any annotations by the primary annotator, the TSC can be low. The following theorem is given directly from \eqref{eq:tcd_transfer} and the cost function for conventional sampling.
\begin{Theorem}
For binary output spaces, ${\cal Y} = \{0, 1\}$, the TSC of auxiliary sampling is smaller than the TSC of conventional sampling if and only if
\begin{align}
k' > \frac{\sigma_p^2 + \sigma_s^2}{\sigma_p^2 + \sigma_a^2} \label{eq:kprime}
\end{align}
where $k' = (\cc + c_a)/(\cc + c_b)$.
\end{Theorem}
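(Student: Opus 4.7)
\section*{Proof proposal for the final theorem}

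The plan is to reduce the cost comparison to an algebraic inequality, since all the hard work has already been done in assembling the cost functions and the optimal sample sizes for each of the two designs. The only task is to equate one closed-form expression with another and isolate $k'$.

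First, I would write down the two Total Sampling Costs explicitly. From the Conventional design, \eqref{eq:ntrad} gives $\ntrad = (\zeta_{\delta}^2/d^2)(\sigma_p^2 + \sigma_a^2)$, so its TSC is $(\cc + \ca)\ntrad$. From the Auxiliary design with abundance correction, \eqref{eq:nsolow} gives $\ntrans = (\zeta_{\delta}^2/d^2)(\sigma_s^2 + \sigma_p^2)$, and since $\na = 0$, the TSC is $(\cc + \cb)\ntrans$, as stated in \eqref{eq:tcd_transfer}.

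Next, I would set up the strict inequality $(\cc + \cb)\ntrans < (\cc + \ca)\ntrad$ and substitute the two closed forms. The common factor $\zeta_{\delta}^2/d^2$ cancels, leaving
\begin{align*}
(\cc + \cb)(\sigma_p^2 + \sigma_s^2) < (\cc + \ca)(\sigma_p^2 + \sigma_a^2).
\end{align*}
Since $\cc + \cb > 0$ and $\sigma_p^2 + \sigma_a^2 > 0$ (assuming at least one is strictly positive, which holds in any nontrivial setting), dividing through by $(\cc + \cb)(\sigma_p^2 + \sigma_a^2)$ preserves the direction of the inequality and yields exactly $k' > (\sigma_p^2 + \sigma_s^2)/(\sigma_p^2 + \sigma_a^2)$. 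Reversing each step gives the converse, so the ``if and only if'' follows.

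There is no real obstacle here: no convexity or optimization argument is needed, unlike in Theorem~\ref{theorem:2}, because both sample sizes $\ntrad$ and $\ntrans$ are already uniquely determined by the variance constraint \eqref{eq:varreq} (there is no trade-off to optimize over in the Auxiliary design, since $\na$ is fixed at zero). The only point worth remarking on in the write-up is that $\sigma_s^2$ and $\sigma_p^2$ were assumed uncorrelated in the derivation of $\var[\tp]$ in \appref{transfer-sampling}; under that same assumption, the theorem as stated is an immediate algebraic consequence.
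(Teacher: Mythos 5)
Your proof is correct and matches the paper's (unstated but clearly intended) argument: the paper simply asserts the theorem "is given directly from \eqref{eq:tcd_transfer} and the cost function for conventional sampling," and your algebra — substituting the two closed-form sample sizes, cancelling $\zeta_{\delta}^2/d^2$, and dividing by the positive quantities to isolate $k'$ — is exactly that direct comparison. Nothing further is needed.
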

If $f_b$ is accurate then $\sigma_s^2$ is small and auxiliary sampling is cheaper then conventional sampling even for low primary annotation costs. For example, if $\alpha = \beta = 0.9 \Rightarrow \sigma_s^2 \approx 0.14$,  $\sigma_p^2 = 0.04$, and $\sigma_a^2 = 0$, it suffices that $k'$ is larger than $4.5$, which is satisfied e.g. if $c_b = 0$ and $c_a>3.5c_c$.  If, on the other hand, $\alpha = \beta = 0.7 \Rightarrow \sigma_s^2 \approx 1.3$, $k'$ must be larger than $33.5$.

\section{Two-Stage Random Sampling with Abundance Correction}
\label{sec:twostage}
In two-stage sampling designs each first stage sample, $x_i$ is again sampled randomly using some number, $s$ of second-stage samples~\cite{sampling}. An analysis of the errors using such designs is provided by Deming~\cite{deming1966some}. Second stage sampling is commonly used e.g. in benthic surveys where each collected photoquadrat is annotated using random point sampling~\cite{kohler2006coral}. This protocol requires $s$ points to be overlaid on each image at locations selected randomly with replacement. The substrate under each point is then annotated by an expert as pertaining to one of some number of classes. An unbiased estimator of the abundance (benthic cover) of each class for a certain sample can be derived by counting how many of the $s$ annotations that were annotated as that class. 

We derive the statistics of two-stage sampling under the population drift assumption, namely that each decision is made by some noisy annotator $f_b$, with known matrix of confusion. We will denote by $x_{i1} \hdotsc x_{is}$ the $s$ locations to be annotated in each sample $x_i$, and $u_{ij} \in \{0, 1\}$ the true value associated with each location. The value of each first stage sample is approximated by $y_i = \sum_{j = 1}^s u_{ij}$. We do not make any assumptions on the probability density from which the first stage samples are drawn, but as previously let $\p$ denote the expected value and $\sigma_p^2$ the variance. Given a classifier $f_b$ with known matrix of confusion, an unbiased estimator of ${u}_{ij}$ is given as previously by
\begin{align}
\tilde{u}_{ij} = \frac{f_b(x_{ij}) + \beta - 1}{\alpha + \beta - 1}. \label{eq:total0}
\end{align}
From this an unbiased estimator of $\tilde{y}_i$ is given by
\begin{align}
\tilde{y}_i = \frac{1}{s} \sum_{j = 1}^s \tilde{u}_{ij}.
\end{align}
We have derived the variance of $\tilde{y}_i$ for the special case where $s=1$ in the main paper. Next, we show how to derive the variance of $\tilde{y}_i$ for a general $s$ by applying the law of total variation twice. We begin by noting that
\begin{align}
\var(\tilde{y}_i) = E[\var(\tilde{y}_i | y_i)] + \var[E(\tilde{y}_i | y_i)]. \label{eq:total1}
\end{align}
The second term is simply $\var[E(\tilde{y}_i | y_i)] = \var[y_i] = \sigma_p^2$, and the first term can be expressed in terms of $\tilde{u}_{ij}$
\begin{align}
\var (\tilde{y}_i | y_i) = \frac{1}{s^2} \sum_{i=1}^s \var(\tilde{u}_{ij} | y_i),
\end{align}
which can be expressed, by again using the law of total variation, as
\begin{align}
\var(\tilde{u}_{ij}|y_i) &= E[\var(\tilde{u}_{ij} | u_{ij}, y_i)] + \var[ E ( \tilde{u}_{ij} | u_{ij}, y_i) )]. \label{eq:total15}
\end{align}
The second term of \eqref{eq:total15} is simply given by $\var[ E ( \tilde{u}_{ij} | u_{ij}, y_i) )] = \var[ u_{ij} | y_i] =y_i (1-y_i)$, but the first term is less obvious. Following Solow et al.~\cite{solow2001estimating}, we first note that 
\begin{align}
\var(f_b(x_{ij}) | u_{ij}, y_i) = u_{ij} \alpha(1 - \alpha) + (1-u_{ij})\beta(1-\beta),
\end{align}
since if $u_{ij} = 1, f_b(x_{ij}) \sim \ber(\alpha)$, and if  $u_{ij} = 0, f_b(x_{ij}) \sim \ber(1-\beta)$.
We then note that
\begin{align}
\var[\tilde{u}_{ij}|u_{ij}, y_i] &= \frac{ \var[f_b(x_{ij})|u_{ij}, y_i]}{(\alpha + \beta - 1)^2},
\end{align}
which follows directly from \eqref{eq:total0}, and also that $E[u_{ij}] = y_i$. Putting this together yields the following expression for the first term of \eqref{eq:total15}:
\begin{align}
E[\var(\tilde{u}_{ij} | u_{ij}, y_i)] & = \frac{y_i\alpha(1-\alpha) + (1-y_i)\beta(1-\beta)}{(\alpha + \beta -1)^2}.
\end{align}
Putting this all together yields
\begin{align}
& \var(\tilde{y}_i) = E[\var(\tilde{y}_i | y_i)] + \var[E(\tilde{y}_i | y_i)] \\
=& E[\frac{1}{s^2} \sum_{i=1}^s \var(\tilde{u}_{ij} | y_i)] + \sigma_p^2 \\
=& \frac{1}{s^2} \sum_{i=1}^s E\left[\var(\tilde{u}_{ij} | y_i) \right] + \sigma_p^2 \\
=& \frac{1}{s^2} \sum_{i=1}^s E\left[\frac{y_i\alpha(1-\alpha) + (1-y_i)\beta(1-\beta)}{(\alpha + \beta -1)^2} + y_i (1-y_i) \right] + \sigma_p^2 \\
=& \frac{1}{s^2} \sum_{i=1}^s \left[ \frac{\p\alpha(1-\alpha) + (1-\p)\beta(1-\beta)}{(\alpha + \beta -1)^2} + \p (1-\p) \right] + \sigma_p^2 \\
=& \frac{1}{s} [\sigma_s^2 + \p (1-\p)] + \sigma_p^2,
\end{align}
where $\sigma_s^2$ is given by \eqref{eq:vars}. Interestingly, the variance of $\tilde{y}$ approach $\sigma_p$ for large number of secondary stage samples, $s$. This is to be expected under the assumption that $f_b$ is perfectly modeled by a known matrix of confusion $Q$. Since $\tilde{y}_i$ is an average across $s$ decisions, the variance introduced by the abundance correction cancels out with large values of $s$. 

Finally, the total variance of $\tp$ is given by
\begin{align}
\var[\tp] =& \frac{1}{\nb} \left( \frac{1}{s} [\sigma_s^2 + \p (1-\p)] + \sigma_p^2 \right),
\end{align}
and the sample size by
\begin{align}
\nb = \frac{\zeta_{\delta}^2}{d^2} \left[ \frac{1}{s} [\sigma_s^2 + \p (1-\p)] + \sigma_p^2 \right]. \label{eq:nsolow}
\end{align}

\section{Supplementary results}
\label{sec:supp_results}
Detailed simulation results are shown in \figref{supp_results}.

%%% SIMULATION RESULTS
\begin{figure}[t]
   \begin{center}
    \includegraphics[width=0.85\linewidth]{\figuredir 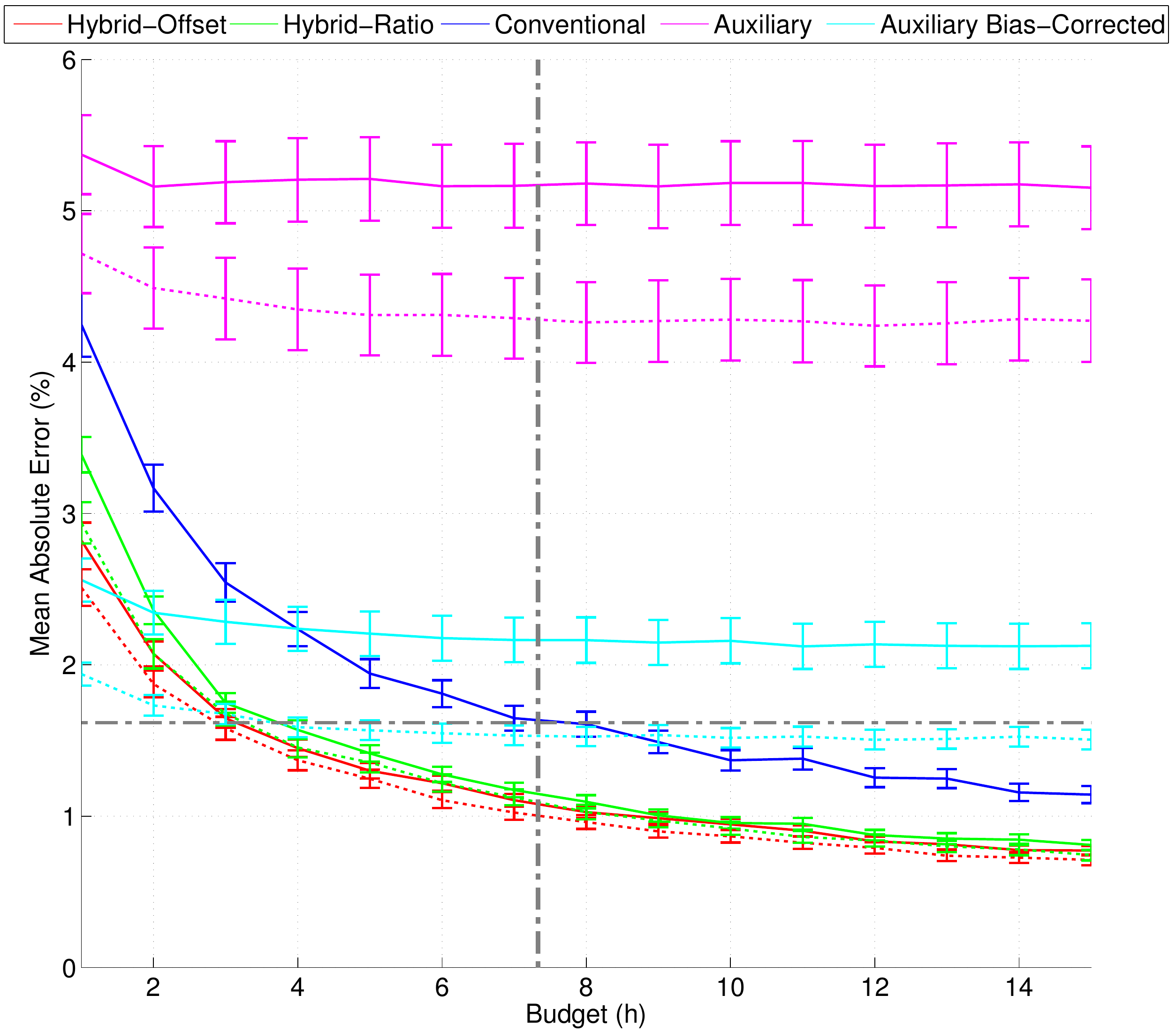}
    \includegraphics[width=0.85\linewidth]{\figuredir 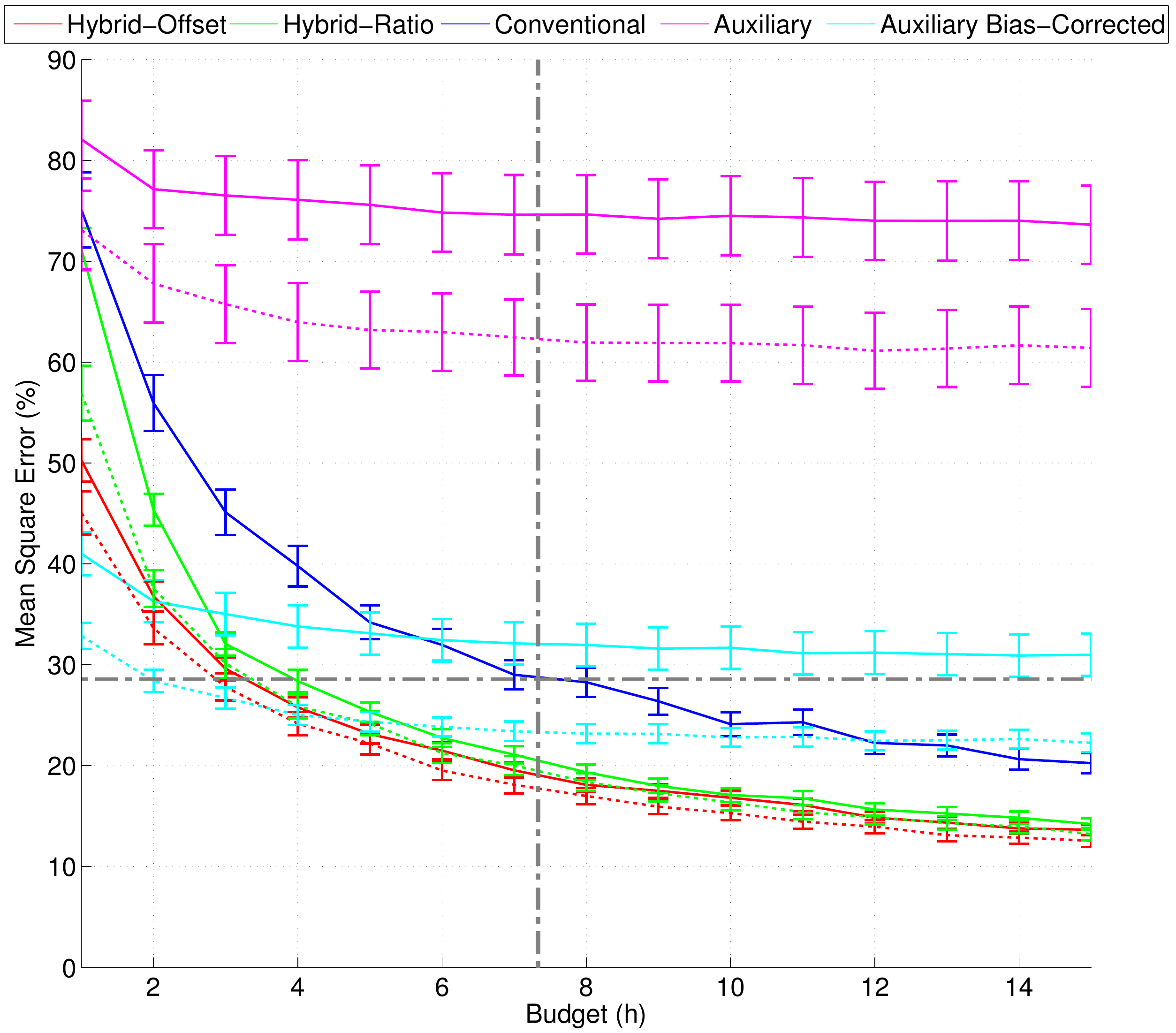}
    \end{center}
    \caption{\textbf{Supplementary Simulations Restuls} Results displayed as mean $\pm$ SE for $n=18$ sampling units, for TSC (budgets) between 1 and 15 person-hours per unit and for (Top) Mean Absolute Error and (Bottom) Mean Square Error} 
   \label{fig:supp_results}
\end{figure}

\section{Bias-Correction Sampling With Unknown Confusion Matrix}
\label{sec:transfer_comparison}

The Auxiliary Bias-Correction design evalued in the simulations assumes that the specificity and sensitivity of $f_b$ is known \emph{a priori} for the data to be sampled. This assumption is strong, and may not always hold. In such cases, one could rely on a Hybrid sampling design and use $\na$ samples annotated by both annotators to estimate $\hat{\alpha}$ and $\hat{\beta}$. In such Hybrid-Bias-Correction design, $\alpha$ and $\beta$ can be estimated as
\begin{align}
\hat{\alpha} &= \frac{\sum_{i = 1}^\na f_{b^*}(x_i) f_{a^*}(x_i)}{ \sum_{i = 1}^\na f_{a^*}(x_i)} \label{eq:alpha}\\
\hat{\beta} &= \frac{\sum_{i = 1}^\na (1 - f_{b^*}(x_i)) (1 - f_{a^*}(x_i))}{ \sum_{i = 1}^\na (1-f_{a^*}(x_i))}, \label{eq:beta}
\end{align}
and an estimator of $\p$ can be defined as
\begin{align}
\hat{\mu}_p = & \frac{1}{\nb} \left[ \sum_{i=1}^\na f_{a}(x_i) + \sum_{i=\na+1}^\nb \frac{f_{b}(x_i) + \hat{\beta} -1}{\hat{\alpha} + \hat{\beta} - 1} \right]. \label{eq:pbiased}
\end{align}

However, we argue that such design is inferior to hybrid sampling for several reasons. First, the bias-corrected mean estimate of \eqref{eq:pbiased} is biased if estimates of $\alpha$ and $\beta$ are used in place of the true values~\cite{solow2001estimating}. Second, it is difficult to derive an analytical expression for $\var[\hat{\mu}_p]$ that accounts for the variances of $\hat{\alpha}$ and $\hat{\beta}$. Without this expression, one cannot derive optimal sample sizes. Third, simulations detailed below indicate that the Hybrid-Offset design achieved lower errors for the same TSC for a wide array of parameters ($\alpha, \beta, \p, \na, \nb$). 

\textbf{Simulations:} For all combinations of $\alpha = [0.6, 0.8, 0.95], \beta = [0.6, 0.8, 0.95], \p = [0.5, 0.75, 0.9]$, $\na = [100, 150 \hdotsc 500]$, and $\nb=1000$, the following simulation was performed. First, $\na$ samples $f_a(x_1) \hdotsc f_a(x_{\na})$ were drawn from a Bernoulli (Ber) distribution with mean $\p$. For each $f_a(x_i) = 1$, $f_b(x_i)$ was drawn from $\textrm{Ber}(\alpha)$, and for each $f_a(x_i) = 0$, $f_b(x_i)$ was drawn from $\textrm{Ber}(1 - \beta)$. The parameters $\alpha, \beta$ and $\mu_b$ were then estimated according to \eqref{eq:alpha}, \eqref{eq:beta}, and \eqref{eq:bhat}. Finally $\nb-\na$ new samples $f_b(x_i)$ were drawn using the same procedure and used to estimes ${\mu}_p$ from \eqref{eq:pbiased} and from \eqref{eq:phat}. This procedure was repeated 2000 times and sample standard deviations were calculated. The signed difference between the standard deviations were calculated for each value of $\alpha$, $\beta$, $\p$, and $\na$, is shown in \figref{errordiff}. These results indicate that the Hybrid-Offset design is more accurate that the Hybrid-Bias-Corrected design for all parameters.

% error curves
\begin{figure*}[ht]
   \begin{center}
    \includegraphics[width=.95\linewidth]{\figuredir 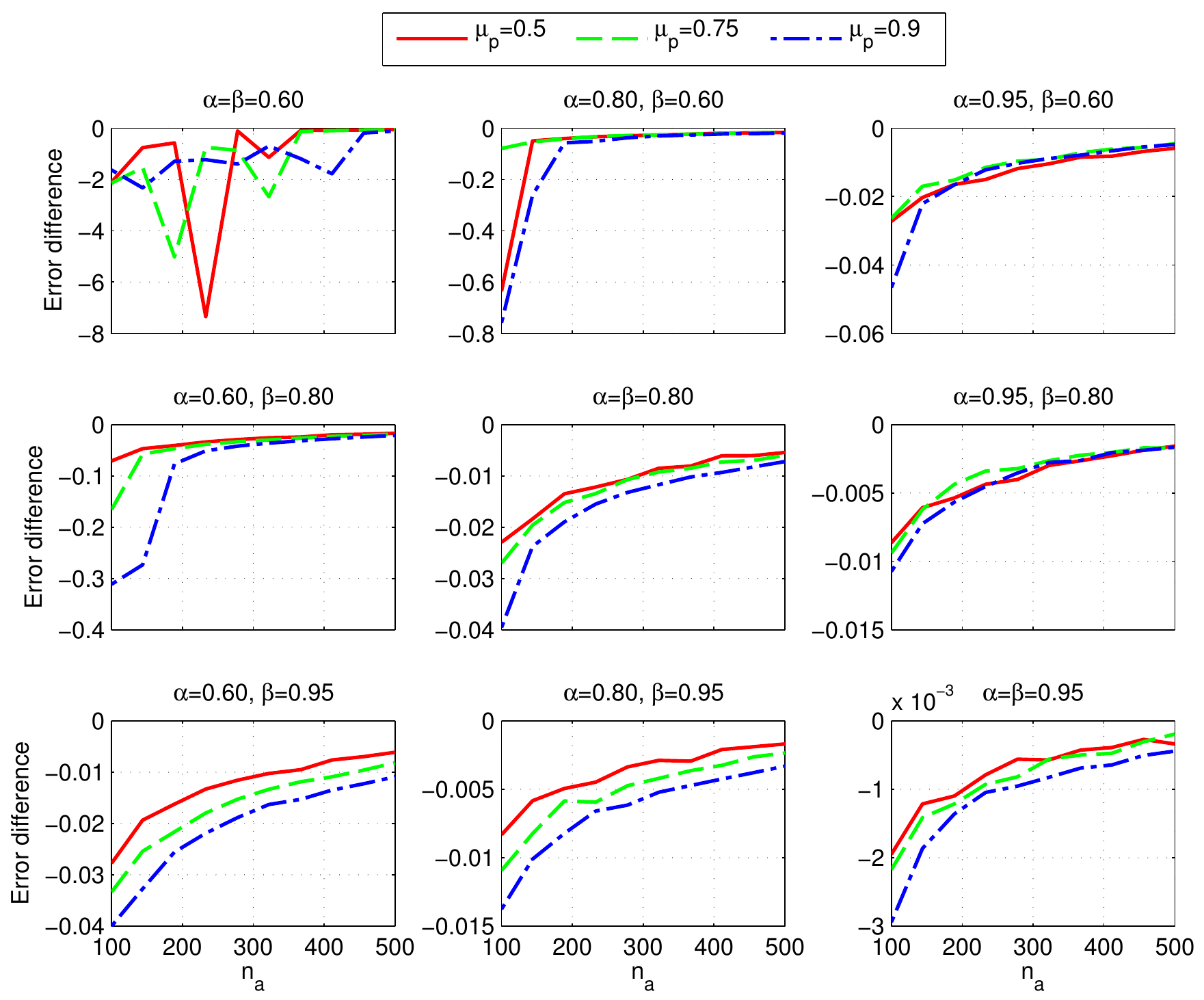}
    \end{center}
    \caption{Difference in estimated standard errors between the Hybrid-Offset design and the Hybrid-Bias-Corrected design, for $\nb=1000$, and different values of $\na$, $\alpha$, $\beta$, and $\p$. Note that all differences are negative, indicating that the sampling errors of the Hybrid-Offset design are smaller}
   \label{fig:errordiff}
\end{figure*}

\end{widetext}
\end{document}